\begin{document}
%
\frontmatter          
\pagestyle{headings}  

\mainmatter   
%
%
\title{Approximating the Cubicity of Trees
}
%

\author{Jasine Babu \inst{1} \and Manu Basavaraju \inst{2} \and L Sunil Chandran \inst{1} \and Deepak Rajendraprasad \inst{3} \and Naveen Sivadasan \inst{4}}
\authorrunning{Babu et al.} 
\institute{Indian Institute of Science, Bangalore, India.
\and Department of Informatics, University of Bergen, Norway,
\and University of Haifa, Israel,
\and Indian Institute of Technology, Hyderabad, India.\\
\email{jasine@csa.iisc.ernet.in, iammanu@gmail.com, sunil@csa.iisc.ernet.in, deepakmail@gmail.com, nsivadasan@iith.ac.in}}
\maketitle  

%
%
 

\setlength{\parindent}{1cm}
\begin{abstract}
Cubicity of a graph $G$ is the smallest dimension $d$, for which $G$ is a unit disc graph in ${\mathbb{R}}^d$, under the $l^\infty$ metric, i.e. 
$G$ can be represented as an intersection graph of $d$-dimensional (axis-parallel) unit hypercubes. 
We call such an intersection representation a $d$-dimensional 
cube representation of $G$.  
Computing cubicity is known to be inapproximable in polynomial time, within an $O(n^{1-\epsilon})$ factor for any $\epsilon >0$, unless $\text{NP}=\text{ZPP}$. 

\parindent 0.5cm In this paper, we present a randomized algorithm that runs in polynomial time and computes cube representations of trees, of dimension 
within a constant factor of the optimum.  It is also shown that the cubicity of trees can be approximated within 
a constant factor in deterministic polynomial time, if the cube representation is not required to be computed. As far as we know, this is the first constant factor approximation 
algorithm for computing the cubicity of trees. It is not yet clear whether computing the cubicity of trees is NP-hard or not. 
\keywords{Cubicity, Approximation Algorithm}
\end{abstract} 
\section{Introduction}
Cubicity of a graph $G$, denoted by $\operatorname{cub}(G)$ is the smallest dimension $d$ such that $G$ can be represented as an intersection graph 
of $d$-dimensional (axis-parallel) unit hypercubes. 
In other words, $\operatorname{cub}(G)$ is the smallest dimension $d$ for which $G$ is a unit disc graph in ${\mathbb{R}}^d$, under the $l^\infty$ metric. 
It is not difficult to see that, $\operatorname{cub}(G)$ is the smallest integer $d$ such that $G$ can be represented as the intersection 
of $d$ unit interval graphs on the same vertex set $V(G)$; i.e $E(G)=E(I_1) \cap E(I_2) \cap \cdots \cap E(I_d)$, where $I_1, I_2, \ldots, I_d$ 
are unit interval graphs with $V(I_i)=V(G)$, for $1 \le i \le d$ \cite{Rob1}. 
If we relax the requirement of unit interval graphs to interval graphs, the corresponding parameter is called boxicity. 
Equivalently, graphs of boxicity at most $d$ are the intersection graphs of $d$-dimensional axis parallel boxes. 
These parameters were introduced by F. S. Roberts \cite{Rob1} in 1968 for studying some problems in Ecology. 
It is easy to see that $\operatorname{box}(G) \le \operatorname{cub}(G)$. 
Boxicity (resp. cubicity) of a graph on $n$ vertices is at most $\left \lfloor\frac{n}{2} \right \rfloor$ (resp. $\left \lceil\frac{2 n}{3} \right \rceil$) \cite{Rob1}. 
By convention, cubicity and boxicity of a complete graph are zero. It follows from the definitions that $\operatorname{cub}(G) \le 1$, if and only if 
$G$ is a unit interval graph and $box(G) \le 1$, if and only if $G$ is an interval graph. 
It is also known that planar graphs have boxicity at most three and tress have boxicity at most two \cite{Thom1,Chinthan}. 

Since unit interval graphs are polynomial time recognizable, whether $\operatorname{cub}(G)$ $\le 1$ is polynomial time decidable. 
However, deciding whether a graph has cubicity at most $k$ is NP-hard, even for $k=2$ and $k=3$ \cite{Yan1,Breu1998}.
Chalermsook et al. \cite{Chalermsook2013} showed that boxicity and cubicity problems are inapproximable in polynomial time, within an $O(n^{1-\epsilon})$ factor 
for any $\epsilon >0$, unless $\text{NP}=\text{ZPP}$. This hardness result holds for graph classes like bipartite, co-bipartite, and split graphs as well. 

There are not many approximation algorithms known to exist for these problems, even for special classes of graphs. 
As far as we know, an $o(n)$ factor approximation algorithm for computing the cubicity of general graphs \cite{AdigaIPEC} and a constant factor 
approximation algorithm with an extra additive error of $\log{n}$ for computing the cubicity of circular arc graphs \cite{Jas1} are the only non-trivial approximation 
algorithms known for the cubicity problem. 

In this paper, we present a randomized algorithm that runs in polynomial time, for computing cube representations of trees. 
Our algorithm computes cube representations of trees of dimension within a constant factor of the optimum. 
If we do not require a corresponding cube representation, then the cubicity of trees can be approximated within 
a constant factor in polynomial time, without using any randomization. The algorithm presented here seems to be the first constant factor approximation 
algorithm for computing the cubicity of trees. It is not yet clear whether computing the cubicity of trees is NP-hard or not. 

Our randomized procedure borrows its ideas from the randomized algorithm devised by Krauthgamer et al. \cite{Krauth2007}, for approximating 
the intrinsic dimensionality of trees. This parameter is fundamentally different and is incomparable with cubicity in general 
(See Appendix for a detailed comparison between the two parameters). However, 
it comes as a surprise that their proof technique works more or less the same way for cubicity of trees, with some problem specific modifications to handle 
the details and the base cases. This is more surprising because Krauthgamer et al. \cite{Krauth2007} devised an $O(\log \log n)$ factor approximation for 
intrinsic dimensionality of general graphs, by extending the proof techniques used for trees whereas cubicity for general graphs is inapproximable within $O(n^{1-\epsilon})$ factor for any $\epsilon > 0$, unless NP=ZPP.
\section{Preliminaries}
In this paper, we are dealing with only finite graphs, without self loops or multi edges. 
Unless specified otherwise, logarithms are taken to the base $2$. A unit hypercube in ${\mathbb{R}}^d$ is a hypercube whose sides are of unit length in the usual 
Euclidean metric, i.e it is a disc in ${\mathbb{R}}^d$ of radius $\frac{1}{2}$ under the $l^\infty$ metric. 
We use $\| \|_\infty$ to denote the $l^\infty$ norm.
We consider our tress as rooted trees in which the root vertex is considered to be at depth zero and 
for any other vertex, its depth is given by its distance from the root. For any two vertices $u$ and $v$ of a tree $T$, the least common
ancestor of $u$ and $v$ is the vertex with the minimum depth on the path between $u$ and $v$ in $T$. 
If $u, v$ are two vertices in a graph $G$, we use $d_{uv}(G)$ to denote the distance 
between $u$ and $v$ in $G$ and when it clear which graph we are talking out, we just use $d_{uv}$. 
\subsection{Cube representations, embeddings and weight-vector assignments to edges} \label{weightAssignments}
Let $G$ be a graph and suppose $f : V(G) \mapsto {\mathbb{R}}^d$ is such that $\|f(v) - f(u)\|_\infty \le 1$ if and only if $u$ and $v$ are 
adjacent in $G$. If we consider unit hypercube corresponding to a vertex $v$ as the unit hypercube centered at $f(v)$, 
then it is easy to see that the hypercubes corresponding to $u$ and $v$ intersect if and only if $\|f(v) -f(u)\|_{\infty} \le 1$. 
Conversely, given a cube representation of $G$ in $d$ dimensions, for any $v \in V(G)$ we can define $f(v)$ as the vector corresponding to the center of the 
hypercube associated with $v$. Since we derived $f$ from a cube representation of $G$, it follows from the definition that 
$\|f(v) - f(u)\|_\infty \le 1$ if and only if $u$ and $v$ are adjacent in $G$.
Thus, cubicity of a graph $G$ is also the minimum dimension $d$ such that there exist a function 
$f : V(G) \mapsto {\mathbb{R}}^d$ such that $\|f(v) - f(u)\|_\infty \le 1$ if and only if $u$ and $v$ are adjacent in $G$.

Now we will turn our attention to the special case of trees and show that there is a correspondence between the maps 
from $V(T)$ to ${\mathbb{R}}^d$ as discussed above, and 
weight-vector assignments to edges $E(T) \mapsto [-1, 1]^d$ with some nice properties. 
Let $r$ denote an arbitrarily chosen root vertex of $T$ and let $h$ be the height of the rooted tree $T$. 
Suppose we have a weight-vector assignment $W : E(T) \mapsto [-1, 1]^d$. 
For any vertex $v \ne r$, let $S_{W}(v)$ be the sum of weight-vectors of edges 
along the path in $T$ from $r$ to $v$, under the weight-vector assignment $W$ and let $S_{W}(r)$ be the zero vector. 
Note that if $u$ and $v$ are adjacent in $T$, then $\|S_W(u) - S_W(v)\|_\infty \le 1$.
\begin{definition}
 Let $W$ be a weight-vector assignment such that $W: E(T) \mapsto [-1, 1]^d$ and $S_W$ be defined with respect to $W$, as above.  
We say that $W$ is a separating weight-vector assignment for a pair $u, v$ of non-adjacent vertices of $T$, 
if $\|S_W(u) - S_W(v)\|_\infty > 1$.
\end{definition}
\textit{If $W: E(T) \mapsto [-1, 1]^d$ is a separating weight-vector assignment for every pair $u, v$ of non-adjacent vertices of $T$, 
then the function $f: V(T) \mapsto {\mathbb{R}}^d$ defined as $f(v)=S_W(v)$ corresponds to a $d$-dimensional 
cube representation of $T$.}   

Conversely, given $f : V(T) \mapsto {\mathbb{R}}^d$ such that $\|f(v) - f(u)\|_\infty \le 1$ if and only if $u$ and $v$ are adjacent in $G$, 
we can also get a corresponding weight-vector assignment $W : E(T) \mapsto [-1, 1]^d$ such that 
$\|S_{W}(u) - S_{W}(v)\|_\infty > 1$, if and only if $u$ and $v$ are non-adjacent. 
If $uv \in E(T)$ such that $u$ is the child vertex of $v$, then define $W(uv)=f(u)-f(v)$, 
which will be a vector belonging to $[-1, +1]^d$. From this, it is immediate that whenever 
$u$ and $v$ are adjacent, $\|S_{W}(u) - S_{W}(v)\|_\infty \le 1$.  
If $u$ and $v$ are non-adjacent vertices, we had $\|f(u)-f(v)\|_\infty > 1$. 
Suppose $a$ is the least common ancestor of $u$ and $v$ in $T$ and 
$u=v_0, v_1, v_2, \ldots, v_{j-1}, a=v_j, v_{j+1}, v_k, v_{k+1}=v$ is the path in $T$ between $u$ and $v$. Since
the path from $v_j=a$ to the root vertex is common to both the path from $u$ to $r$ and $v$ to $r$, it is easy to see that 
$S_W(u) - S_W(v) = W(v_0,v_1)+W(v_1, v_2)+\cdots+W(v_{j-1}, v_j)-W(v_j, v_{j+1}) - \cdots - W(v_k, v)$. Therefore, 
$\|S_W(u) - S_W(v)\|_\infty = \|W(u,v_1)+W(v_1, v_2)+\cdots+W(v_{j-1}, v_j)-W(v_j, v_{j+1}) - \cdots - W(v_k, v)\|_\infty$.
Since for any edge $(v_i, v_{i+1})$ in the $uv$ path $W(v_i,v_{i+1})=f(v_i)-f(v_{i+1})$, the RHS is equal to $\|f(u)-f(v)\|_\infty >1$.
We note down the following simple property, since it is used in later parts of the paper as well.
\begin{property}\label{propSumofWeights}
Let $T$ be a tree and $W : E(T) \mapsto [-1, 1]^d$ and for any vertex $v$, let $S_{W}(v)$ be the sum of weight-vectors on the edges along
the path in $T$ from the root of $t$ to $v$, under the weight-vector assignment $W$.
Suppose $u=v_0, v_1, v_2, \ldots, v_k, v_{k+1}=v$ is the path in $T$ between $u$ and $v$. Then,\\
$S_W(u) - S_W(v) = W(u,v_1)+W(v_1, v_2)+\cdots+W(v_{j-1}, v_j)-W(v_j, v_{j+1}) - \cdots -W(v_{k-1}, v_k) - W(v_k, v)$, 
where $v_j$ is the least common ancestor of $u$ and $v$ in $T$.
\end{property}
Our discussion is summarized below:
\begin{lemma}\label{lemCorrespondance}
Given a cube representations of $T$ of dimension $d$, in polynomial time we can compute weight-vector assignment $W: E(T) \mapsto [-1, 1]^d$
that is a separating weight-vector assignment for every pair of non-adjacent vertices $u$ and $v$ of $T$. 
Conversely, given weight-vector assignment $W: E(T) \mapsto [-1, 1]^d$
that is a separating weight-vector assignment for every pair of non-adjacent vertices $u$ and $v$ of $T$, then in polynomial time, we can obtain 
a $d$-dimensional cube representation of $T$.
\end{lemma}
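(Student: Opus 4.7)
The plan is to observe that essentially all of the work has already been done in the paragraphs preceding the lemma statement, and the proof is a matter of organizing those observations and checking that each construction runs in polynomial time.

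First, for the forward direction, I would start from a given $d$-dimensional cube representation of $T$ and let $f : V(T) \mapsto \mathbb{R}^d$ send each vertex $v$ to the center of its associated hypercube. As already noted, $f$ satisfies $\|f(u) - f(v)\|_\infty \le 1$ if and only if $uv \in E(T)$. Root $T$ at an arbitrary vertex $r$ and, for every edge $uv$ with $u$ the child of $v$, define $W(uv) := f(u) - f(v)$. Since $uv \in E(T)$, $W(uv)$ lies in $[-1,1]^d$. A straightforward induction on depth then shows $S_W(v) = f(v) - f(r)$ for every $v \in V(T)$, so $S_W(u) - S_W(v) = f(u) - f(v)$ for all pairs. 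In particular, for non-adjacent $u,v$ we get $\|S_W(u) - S_W(v)\|_\infty = \|f(u)-f(v)\|_\infty > 1$, as required. The computation is clearly polynomial: reading off centers and traversing edges once each.

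For the converse, I would take a separating $W : E(T) \mapsto [-1,1]^d$ and simply define $f(v) := S_W(v)$. For adjacent $u,v$ (say $u$ is the child of $v$), $f(u) - f(v) = W(uv) \in [-1,1]^d$, so $\|f(u)-f(v)\|_\infty \le 1$. For non-adjacent $u,v$, Property~\ref{propSumofWeights} expresses $S_W(u) - S_W(v)$ as the signed sum of edge weights along the $uv$-path, split at the least common ancestor, and the separating hypothesis gives $\|f(u)-f(v)\|_\infty = \|S_W(u)-S_W(v)\|_\infty > 1$. Thus $f$ satisfies the ``iff'' characterization of cube representations from Section~\ref{weightAssignments}, and placing a unit hypercube centered at $f(v)$ for each $v$ yields a $d$-dimensional cube representation. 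Computing $S_W(v)$ for all $v$ is a single DFS from $r$, which is polynomial.

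There is no real obstacle here; the only thing worth being careful about is that $W$ is defined on an unoriented edge while the formula $W(uv) = f(u) - f(v)$ depends on which endpoint is the child. Fixing the rooting $r$ once and for all removes the ambiguity, and Property~\ref{propSumofWeights} (with its alternating signs at the least common ancestor) is exactly what makes the two directions consistent. Hence both constructions are inverse to one another up to the choice of root, and the lemma follows.
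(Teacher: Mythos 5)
Your proposal is correct and follows essentially the same route as the paper, which proves this lemma via the discussion immediately preceding it in Section~\ref{weightAssignments}: defining $W(uv)=f(u)-f(v)$ on child--parent edges for the forward direction and $f(v)=S_W(v)$ together with Property~\ref{propSumofWeights} for the converse. Your telescoping observation $S_W(v)=f(v)-f(r)$ is a slightly tidier way to package the paper's least-common-ancestor calculation, but it is the same argument.
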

\subsection{Bounds for cubicity}
In this section we discuss some lower bounds and upper bounds for cubicity of tress and combine them to obtain 
cube representations of small dimension for trees having relatively small height. The following is a well known lower bound for the cubicity of general graphs. 
\begin{lemma}[\cite{ChandranMannino}] \label{lbgeneralgraphs}
If $G$ is a graph of diameter $d>0$, on $n$ vertices, then $\operatorname{cub}(G) \ge \left \lceil \frac{\log \alpha(G)}{\log(d+1)} \right \rceil$, 
where $\alpha(G)$ is the cardinality of 
a maximum independent set in $G$.
\end{lemma}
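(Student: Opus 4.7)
Let $k = \operatorname{cub}(G)$ and fix a $k$-dimensional cube representation, which by the correspondence discussed at the start of Section~\ref{weightAssignments} is simply a map $f:V(G)\to\mathbb{R}^k$ satisfying $\|f(u)-f(v)\|_\infty \le 1$ if and only if $uv\in E(G)$. The plan is to bound the spread of the point set $\{f(v):v\in V(G)\}$ in terms of the diameter $d$, then partition its bounding box into cells small enough that any two points sharing a cell are forced to be adjacent, and finally apply pigeonhole against an independent set of size $\alpha(G)$.

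For the first step I would use an $l^\infty$-triangle inequality along graph geodesics: if $u=u_0,u_1,\ldots,u_\ell=v$ is a shortest $u$-$v$ path in $G$, then $\|f(u_i)-f(u_{i+1})\|_\infty\le 1$ for each consecutive pair, and summing gives $\|f(u)-f(v)\|_\infty\le \ell\le d$. In particular, in every coordinate $i$ the spread $\max_v f(v)_i-\min_v f(v)_i$ is at most $d$, so after a suitable translation the entire point set lies inside an axis-parallel closed box of side length $d$. I would then slice this box along each coordinate into $d+1$ equal half-open pieces of length $d/(d+1)<1$, producing a partition of the bounding box into $(d+1)^k$ cells. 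Any two points in a common cell differ by strictly less than $1$ in every coordinate, so their $l^\infty$-distance is less than $1$, forcing the two corresponding vertices to be adjacent in $G$. Hence every independent set uses at most one vertex per cell, giving $\alpha(G)\le (d+1)^k$ and, taking logarithms, the desired inequality $\operatorname{cub}(G)=k\ge \lceil \log\alpha(G)/\log(d+1)\rceil$.

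I do not anticipate a serious obstacle. The only subtlety is making the partition ``strict enough'' that \emph{same cell} really does force $l^\infty$-distance at most $1$: this is why I take subintervals of length strictly less than $1$ rather than exactly $1$. A minor care point is to fix the translation \emph{before} partitioning, so that the slices in coordinate $i$ cover exactly $[\min_v f(v)_i,\min_v f(v)_i+d]$ and the cell count remains $(d+1)^k$ instead of something larger.
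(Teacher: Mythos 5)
Your proof is correct, and it reaches the same packing bound $(d+1)^k \ge \alpha(G)$ as the paper, but by a slightly different mechanism. The paper projects the cube representation onto the $k$ axes to get $k$ unit interval supergraphs of diameter at most $d$, concludes that the whole representation fits in a box of volume $(d+1)^k$ (side $d+1$ per coordinate, accounting for the unit widths of the cubes themselves), and then compares volumes: the $\alpha(G)$ pairwise disjoint unit cubes of an independent set occupy volume at least $\alpha(G)$. You instead bound the spread of the \emph{centers} by $d$ via the triangle inequality along geodesics, partition the bounding box into $(d+1)^k$ cells of side $d/(d+1)<1$, and apply pigeonhole: two centers in one cell would be at $l^\infty$-distance at most $1$ and hence adjacent. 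Your discretization avoids any appeal to volume additivity of disjoint unions of cubes, at the cost of a small bookkeeping point you already half-noticed: $d+1$ half-open slices of length $d/(d+1)$ cover only $[\min_v f(v)_i,\min_v f(v)_i+d)$, so you should close the last slice on the right (this changes nothing, since two points in a closed interval of length $d/(d+1)$ still differ by less than $1$). With that trivial fix the argument is complete.
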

\begin{proof}
Suppose $\operatorname{cub}(G)=k$. This means that $G$ can be represented as the intersection graph of axis parallel hypercubes in $k$ dimensions. 
This cube representation, when projected to the $k$ fundamental directions, give $k$ unit interval supergraphs of $G$, say $I_1, I_2, \ldots, I_k$. 
Clearly, each $I_i$, $1 \le i \le k$ has diameter at most $d$ and in any interval representation of $I_i$, 
the distance between the left end point of the left most unit interval and the right end point of the rightmost unit interval is at most $d+1$.   
This implies that the total volume occupied by the cube representation, in the $k$-dimensional Euclidean space is at most $(d+1)^k$. 
But we know that there are $\alpha(G)$ vertices such that unit volume hypercubes corresponding to no two of them share a common point. 
Therefore, the volume occupied by the cube representation is at least $\alpha(G)$ units. Thus we have, $(d+1)^k \ge \alpha(G)$.
~\hfill$\qed$
\end{proof}
\begin{definition}
Let $G$ be a graph of diameter $d$ and for each $1 \le r \le d$ and $v \in V(T)$, let $B_{v,r}$ represent the set of vertices in $G$, 
which are at a distance at most $r$ from $v$. Then, we define $\rho(G)=\displaystyle\max_{v \in V, 1 \le r \le d}{\frac {\log{\frac{|B_{v, r}|}{2}}}{\log{(2r+1)}}}$.
\end{definition}
The following lemma is a direct consequence of the above definition.
\begin{lemma}\label{lemvertexCount}
 For any $v \in V(G)$ and $1 \le r \le diameter(G)$, $|B_{v,r}| \le 2 (2r+1)^{\rho(G)}$.
\end{lemma}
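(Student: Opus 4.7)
The statement is essentially a rearrangement of the defining inequality for $\rho(G)$, so the plan is short. I would start from the definition
\[
\rho(G) \;=\; \max_{v \in V,\; 1 \le r \le d}\; \frac{\log \bigl(|B_{v,r}|/2\bigr)}{\log(2r+1)},
\]
and observe that, by the maximum, for every admissible pair $(v,r)$ we have
\[
\rho(G) \;\ge\; \frac{\log \bigl(|B_{v,r}|/2\bigr)}{\log(2r+1)}.
\]
Since $r \ge 1$ we have $\log(2r+1) > 0$, so multiplying through is safe and yields $\log(|B_{v,r}|/2) \le \rho(G)\,\log(2r+1) = \log (2r+1)^{\rho(G)}$. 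Exponentiating and multiplying by $2$ gives $|B_{v,r}| \le 2\,(2r+1)^{\rho(G)}$, which is exactly the claim.

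The only thing to be slightly careful about is the edge case where $|B_{v,r}|/2 < 1$, so that the numerator in the definition becomes negative; but this does not affect the argument, since the direction of the inequality above is preserved, and in fact in this regime the bound $2(2r+1)^{\rho(G)} \ge 2$ is automatically satisfied as long as the definition makes $\rho(G)$ nonnegative on at least one pair (which it does: pick any $v,r$ with $|B_{v,r}| \ge 2$, e.g.\ $r = \text{diameter}(G)$ if $G$ has more than one vertex). There is no real obstacle here — the lemma is just the contrapositive form of the maximum in the definition, restated for convenient use later in the paper.
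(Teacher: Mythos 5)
Your proof is correct and matches the paper, which gives no argument at all beyond stating that the lemma ``is a direct consequence of the above definition'' --- your rearrangement of the defining inequality for $\rho(G)$ (using that $\log(2r+1)>0$ for $r\ge 1$) is exactly that direct consequence, spelled out. The edge-case remark about $|B_{v,r}|/2<1$ is harmless but unnecessary, since the algebra goes through uniformly regardless of the sign of the numerator.
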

\begin{theorem}\label{lbtrees}
 For any tree $T$, $\operatorname{cub}(T) \ge \lceil \rho(T) \rceil$.
\end{theorem}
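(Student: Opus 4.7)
The plan is to apply Lemma \ref{lbgeneralgraphs} not to $T$ itself but to a carefully chosen induced subgraph, namely the ball $T[B_{v,r}]$ around the vertex $v$ and radius $r$ that achieve the maximum in the definition of $\rho(T)$. The key observation is that cubicity is monotone under induced subgraphs (restricting a cube representation of $T$ to any subset of vertices gives a valid cube representation of the induced subgraph), so it suffices to bound $\operatorname{cub}(T[B_{v,r}])$ from below.

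First I would fix $v \in V(T)$ and $r \in \{1,\dots,\operatorname{diam}(T)\}$ achieving $\rho(T) = \frac{\log(|B_{v,r}|/2)}{\log(2r+1)}$, and then verify two structural facts about $T[B_{v,r}]$. Since $T$ is a tree, for every $u \in B_{v,r}$ the unique $u$--$v$ path in $T$ has length at most $r$ and hence lies entirely inside $B_{v,r}$; consequently $T[B_{v,r}]$ is a connected subtree in which distances coincide with those in $T$. This immediately gives $\operatorname{diam}(T[B_{v,r}]) \le 2r$ (any two vertices in the ball are joined by a path through $v$ of length at most $2r$). Second, $T[B_{v,r}]$ is bipartite, so it has an independent set of size at least $|B_{v,r}|/2$; that is, $\alpha(T[B_{v,r}]) \ge |B_{v,r}|/2$.

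Now I would apply Lemma \ref{lbgeneralgraphs} to the graph $T[B_{v,r}]$, obtaining
\[
\operatorname{cub}(T[B_{v,r}]) \;\ge\; \left\lceil \frac{\log \alpha(T[B_{v,r}])}{\log(\operatorname{diam}(T[B_{v,r}])+1)} \right\rceil \;\ge\; \left\lceil \frac{\log(|B_{v,r}|/2)}{\log(2r+1)} \right\rceil.
\]
Combining this with the induced subgraph monotonicity $\operatorname{cub}(T) \ge \operatorname{cub}(T[B_{v,r}])$ and the choice of $v,r$ yields $\operatorname{cub}(T) \ge \lceil \rho(T) \rceil$, as required.

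The main obstacle is really just the first structural step, checking that in a tree the ball $B_{v,r}$ induces a connected subtree of diameter at most $2r$ in which distances agree with those in $T$; the bipartite independent set bound and the subgraph monotonicity of cubicity are both routine. The rest of the argument is a direct invocation of Lemma \ref{lbgeneralgraphs}.
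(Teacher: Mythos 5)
Your proof is correct and follows essentially the same route as the paper: the paper's (much terser) argument likewise applies Lemma \ref{lbgeneralgraphs} to the subtree induced on $B_{v,r}$, using that it has diameter at most $2r$ and an independent set of size at least $|B_{v,r}|/2$, together with the monotonicity of cubicity under induced subgraphs. You have simply spelled out the structural facts (connectivity of the ball in a tree, distance preservation, bipartiteness) that the paper leaves implicit.
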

\begin{proof}
This directly follows from Lemma \ref{lbgeneralgraphs}, because the subtree of $T$ induced on $B_{v,r}$ has an independent set of 
size at least $\frac{|B_{v,r}|}{2}$ and diameter at most $2r$.
~\hfill$\qed$
\end{proof}
\begin{remark}
 It should be noted that it is only in the case of trees that the parameter $\rho$ is a lower bound for cubicity. 
In the case of general graphs, this is not applicable. An easy counter example would be the case of cliques.
\end{remark}
\begin{lemma}\label{lemtrivupper}
For any tree $T$ on $n$ vertices, $\operatorname{cub}(T) \le 1+ \left \lceil \log n \right \rceil$ and a cube representation of $T$ of 
dimension $1+ \left \lceil \log n \right \rceil$ can be constructed in polynomial time. 
\end{lemma}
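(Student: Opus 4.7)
The idea is to write $T$ as the edge-wise intersection of a unit interval graph $U$ and an interval graph $I$, each admitting a small cube representation, and concatenate the two representations.

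Root $T$ at an arbitrary vertex and, for each $v\in V(T)$, record its depth $d_v$ and a DFS interval $[a_v,b_v]$. The family $\{[a_v,b_v]\}_{v\in V(T)}$ is laminar and satisfies $[a_u,b_u]\subseteq [a_v,b_v]$ iff $v$ is an ancestor of $u$ in $T$. Define $U$ as the unit interval graph on $\{[d_v-\tfrac{1}{2},\,d_v+\tfrac{1}{2}]\}$, so $u\sim_U v$ iff $|d_u-d_v|\le 1$, and $I$ as the interval graph on $\{[a_v,b_v]\}$, so by the laminar property $u\sim_I v$ iff one of $u,v$ is an ancestor of the other in $T$. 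Every parent--child pair of $T$ is an edge of both $U$ and $I$; conversely, any edge of $U\cap I$ has depth difference at most $1$ and is ancestor--descendant, which forces depth difference exactly $1$ and hence a parent--child pair. Therefore $E(T)=E(U)\cap E(I)$.

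The graph $U$ has the obvious one-dimensional cube representation given by its defining unit intervals. For the interval graph $I$, I invoke the known bound that every interval graph on $n$ vertices has cubicity at most $\lceil\log n\rceil$, with a corresponding cube representation computable in polynomial time; the construction labels the vertices $1,\ldots,n$ and builds, for each of the $\lceil\log n\rceil$ bit positions, a unit interval supergraph of $I$ whose joint intersection equals $I$. Concatenating the two cube representations produces a cube representation of $U\cap I=T$ in $1+\lceil\log n\rceil$ dimensions, because two axis-parallel unit hypercubes in the product space intersect iff they intersect in every factor. Lemma~\ref{lemCorrespondance} then also yields the corresponding weight-vector assignment in polynomial time.

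The main obstacle is establishing and constructing the $\lceil\log n\rceil$-dimensional cube representation of the interval graph $I$ in polynomial time. While $\operatorname{cub}(I)\le\lceil\log n\rceil$ for interval graphs is a known result, exhibiting the $\lceil\log n\rceil$ unit interval supergraphs explicitly requires a careful binary labeling of the vertices together with a compatible placement of unit intervals in each bit coordinate, arranged so that their intersection retains exactly the edges of $I$ while removing every non-edge.
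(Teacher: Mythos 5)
Your proposal is correct and follows essentially the same route as the paper: the paper cites Shah for the decomposition of $T$ into a unit interval graph intersected with an interval graph (which is exactly the depth-interval plus DFS-interval construction you spell out), and then, like you, invokes the known result that an interval graph on $n$ vertices admits a $\left\lceil \log n \right\rceil$-dimensional cube representation computable in polynomial time. Your version merely makes the boxicity-two decomposition explicit rather than citing it, and correctly identifies the interval-graph cubicity bound as the one external ingredient both arguments rely on.
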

\begin{proof}
Shah \cite{Chinthan} describes a polynomial time algorithm for constructing two interval supergraphs $I_1$ and $I_2$ of $T$
such that $V(T)=V(I_1)=V(I_2)$, $I_1$ is a unit interval graph and $E(T)=E(I_1) \cap E(I_2)$. Since we also know that
any interval graph has $\left \lceil \log n \right \rceil$-dimensional cube representation and in polynomial time 
we can construct $\left \lceil \log n \right \rceil$ unit interval graphs on the same vertex set $V(T)=V(I_2)$ such that the intersection 
of their edge sets is $E(I_2)$ \cite{Chandran09}. From this, the statement follows.
\qed
\end{proof}
\begin{lemma}\label{lemSmallTreeCubicity}
 Let $T$ be a tree with $\operatorname{cub}(T) \ge 2$ and $T_i$ be a subtree of $T$ of height at most $2^{2^4}$. 
 Then, a cube representation of $T_i$ of dimension $\lceil c \times \rho(T) \rceil + 2  \le (c+1) \times \operatorname{cub}(T)$ or more can be 
 constructed in polynomial time, where $c=22.77$. 
\end{lemma}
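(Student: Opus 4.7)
The plan is to apply the trivial $O(\log n)$ upper bound of Lemma~\ref{lemtrivupper} directly to $T_i$, after first bounding $|V(T_i)|$ via the ball-size estimate of Lemma~\ref{lemvertexCount}.

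First I would note that since $T_i$ is a subtree of $T$, if $r_i$ denotes the root of $T_i$ then the distance in $T$ between $r_i$ and any vertex of $T_i$ equals the corresponding distance in $T_i$. Because $T_i$ has height at most $H := 2^{2^4}$, every vertex of $T_i$ lies in the ball $B_{r_i, H}$ in $T$, and Lemma~\ref{lemvertexCount} yields
\[
|V(T_i)| \;\le\; 2\,(2H+1)^{\rho(T)}.
\]

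Next I would feed $T_i$ into the construction of Lemma~\ref{lemtrivupper}, which in polynomial time produces a cube representation of $T_i$ of dimension at most $1 + \lceil \log |V(T_i)| \rceil$. Substituting the ball bound and using $H = 2^{16}$, so that $\log(2H+1) = \log(2^{17}+1) < 17.001$, the constructed dimension is at most
\[
2 + \bigl\lceil \rho(T)\log(2H+1)\bigr\rceil \;\le\; 2 + \lceil 17.001\,\rho(T)\rceil \;\le\; \lceil c\,\rho(T)\rceil + 2,
\]
with plenty of slack since $c = 22.77 \gg 17.001$. The outer inequality $\lceil c\,\rho(T)\rceil + 2 \le (c+1)\operatorname{cub}(T)$ then follows from Theorem~\ref{lbtrees} (giving $\rho(T) \le \operatorname{cub}(T)$) together with the hypothesis $\operatorname{cub}(T) \ge 2$, which supplies the extra $\operatorname{cub}(T)$ factor needed to absorb the ceiling and the additive $2$.

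The main obstacle is essentially bookkeeping — propagating ceilings and the additive constant $+2$ correctly through the two inequalities with the chosen $c = 22.77$ — rather than any combinatorial difficulty. All the substantive content is provided by Lemmas~\ref{lemvertexCount} and~\ref{lemtrivupper}; beyond the observation that $V(T_i)$ sits inside a ball of radius $H$ in $T$, the proof reduces to arithmetic on the fixed constant $H = 2^{16}$.
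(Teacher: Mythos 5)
Your proof is correct and follows essentially the same route as the paper's: bound $|V(T_i)|$ by the ball estimate of Lemma~\ref{lemvertexCount} with radius $2^{16}$, feed this into Lemma~\ref{lemtrivupper} to obtain dimension $2+\lceil \rho(T)\log(2^{17}+1)\rceil \le \lceil 22.77\,\rho(T)\rceil+2$, and invoke Theorem~\ref{lbtrees} for the comparison with $\operatorname{cub}(T)$. The only point the paper makes explicitly that you leave implicit is the trivial padding of a lower-dimensional representation with zero coordinates to realize the ``or more'' clause.
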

\begin{proof}
If $\operatorname{cub}(T) \le 1$, $T$ should be path; otherwise, it has an induced star on four vertices, denoted as $K_{1, 3}$, 
which forces $\operatorname{cub}(T) \ge 2$ \cite{Chandran09}. 
Since we assumed that $\operatorname{cub}(T) \ge 2$, $T$ contains an induced $K_{1, 3}$ and therefore, 
$\lceil \rho(T) \rceil \ge 1$. 
If $\operatorname{cub}(T_i) \ge 2$, by Lemma \ref{lemvertexCount},  
$|V(T_i)| \le 2 (2^{17}+1)^{\rho(T)}$. By Lemma \ref{lemtrivupper}, a cube representation of $T$ 
of dimension $d \le 2+ \left \lceil \rho(T) \log (2^{17}+1)\right \rceil$ can be constructed in polynomial time. 

After getting a cube representation of $T_i$ in a lower dimension $d_1$, it is a trivial job to extend it to a higher dimension $d_2$. 
Consider the cube representation as a mapping $f : V(T) \mapsto {\mathbb{R}}^{d_1}$, as described in Section \ref{weightAssignments} and for each $v \in V(T)$, append 
the vector $f(v)$ with $d_2- d_1$ additional coordinates each of whose value is zero. 
By Lemma \ref{lbtrees}, the statement follows.
~\hfill$\qed$
\end{proof}
\section{Constructing the cube representation}
Only cliques have cubicity zero. 
If a tree has a vertex of degree three, its cubicity is greater than one, since it has an induced $K_{1, 3}$ \cite{Chandran09}. 
Therefore, a tree of 
cubicity one can be only a path, whose unit interval representation is easy to construct. 
Hence, for the remaining parts of this paper, we assume that $\operatorname{cub}(T) \ge 2$. 
This also means that $n \ge 4$ and $\lceil \rho(T) \rceil \ge 1$. 

In the previous section, we saw that for a tree $T$, $\lceil \rho(T) \rceil$ is a lower bound for $\operatorname{cub}(T)$. 
Since $\rho(T)$ can be computed in polynomial time by its definition, if we can show the existence of a constant $c$ 
such that $\operatorname{cub}(T) \le c \lceil \rho(T) \rceil$ for any tree $T$, then $c \lceil \rho(T) \rceil$ will serve as 
a polynomial time computable $c$ factor approximation for $\operatorname{cub}(T)$. The existence and determination of such a constant is proved 
using probabilistic arguments and the techniques we describe below are essentially derived from the 
techniques used in Krauthgamer et al. \cite{Krauth2007}. The method also gives a randomized algorithm to compute 
the corresponding cube representation. 
\subsection{A recursive decomposition of trees}
We first define a recursive decomposition of the rooted tree $T$ into rooted subtrees. 

Let $h$ denote the height of the tree $T$. 
Let $k=\lceil \log\log h \rceil$ and $\Gamma = 2^{2^k}$. 
Clearly, $\sqrt{\Gamma} = 2^{2^{k-1}} < h \le 2^{2^k} = \Gamma$. 
For each $0 \le i \le k-1$, let $h_i=\Gamma^{\frac{1}{2^i}}$. 
Thus, $h_0=\Gamma$ and $h_{i+1} = \sqrt{h_i}$. 
Let $e$ denote the minimum even integer such that $h_e \le 2^{16}$ and $o$ denote the minimum odd 
integer such that $h_o \le 2^{16}$. (This means $\{h_e, h_o\} = \{2^{2^3}, 2^{2^4}\}$). 

For each integer $i$ such that $\max(e, o) \ge i \ge 0$ we define two sets of rooted subtrees of $T$ as follows: 
If we delete all edges of $T$ that connect vertices at depth $j$ and $j+1$ for each $j$ which is a positive integer multiple of $3h_i$, 
the tree $T$ gets decomposed into several vertex disjoint subtrees. We consider each such subtree as a rooted subtree with its root being the
vertex in the subtree of smallest depth with respect to $T$. We denote this family of rooted subtrees of $T$ as $\mathcal{A}_i$.
In a similar way, let $\mathcal{B}_i$ denote the family of rooted subtrees of $T$, 
obtained by deleting all edges of $T$ that connect vertices at depth $j$ and $j+1$ for each $j$ such that $j \equiv h_i \mod 3h_i$. 
Let $O_i^A$ denote the set of edges deleted from $T$ to form $\mathcal{A}_i$ and let $O_i^B$ denote the set of edges deleted from $T$ to form $\mathcal{B}_i$. 
Let $\mathcal{L}_i=\mathcal{A}_i \cup \mathcal{B}_i$. 
\begin{lemma} \label{lemSeparationLevel}
For each $i$ such that $\max(e, o) \ge i \ge 0$:
\begin{enumerate}
  \item The rooted trees in $\mathcal{L}_i$ have height at most $3h_i$. 
 \item Trees in $\mathcal{A}_{i+1}$ are subtrees of trees in $\mathcal{A}_{i}$ and trees in $\mathcal{B}_{i+1}$ are subtrees of trees in $\mathcal{B}_i$. 
This is because $O_i^A \subseteq O_{i+1}^A$ and $O_i^B \subseteq O_{i+1}^B$. 
 \item Vertex sets of trees in $\mathcal{A}_i$ partition $V(T)$. Same is the case with $\mathcal{B}_i$s. 
  \item If $u$ and $v$ are two vertices such that $d_{uv} \le h_i$, then there exist at least one subtree $F \in \mathcal{L}_i$ 
 such that both $u$ and $v$ belong to $V(F)$. 
\end{enumerate}
\end{lemma}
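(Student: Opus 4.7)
The plan is to handle the four parts in order, with Parts 1 and 3 following immediately from the construction, Part 2 needing a brief modular computation, and Part 4 being the crux.

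Part 3 is immediate: since $\mathcal{A}_i$ (resp.~$\mathcal{B}_i$) is obtained from $T$ by deleting a set of edges, the resulting forest's components have vertex sets that partition $V(T)$. For Part 1, I would observe that the vertex set of any component of $\mathcal{A}_i$ lies in a depth interval of the form $[3h_i m + 1,\, 3h_i(m+1)]$ (or $[0, 3h_i]$ for $m = 0$), so its height as a rooted subtree is at most $3h_i$; the analogous interval decomposition of $\mathcal{B}_i$, with first block $[0, h_i]$ and subsequent blocks $[h_i + 3h_i m + 1,\, h_i + 3h_i(m+1)]$, gives the same bound.

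For Part 2, the containment $O_i^A \subseteq O_{i+1}^A$ is immediate from $h_i = h_{i+1}^2$: every positive integer multiple of $3h_i$ is a positive multiple of $3h_{i+1}$. For $O_i^B \subseteq O_{i+1}^B$, I would take any $j = h_i + 3h_i m$ and compute $j - h_{i+1} = h_{i+1}\bigl(h_{i+1}(1 + 3m) - 1\bigr)$, so divisibility by $3h_{i+1}$ reduces to $h_{i+1} \equiv 1 \pmod{3}$. Since $h_{i+1} = 2^{2^{k-i-1}}$ and $2^n \equiv 1 \pmod{3}$ iff $n$ is even, the claim holds whenever $2^{k-i-1}$ is even, i.e.\ whenever $k - i - 1 \ge 1$. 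The parameter choices in the excerpt give $\max(e, o) = k - 3$, so $i + 1 \le k - 2$ throughout the relevant range, and the congruence holds.

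Part 4 is the main obstacle. The plan is: let $a$ denote the LCA of $u$ and $v$ and set $D = \max(\operatorname{depth}(u), \operatorname{depth}(v)) - \operatorname{depth}(a)$, noting $D \le d_{uv} \le h_i$. Every edge on the $u$--$v$ path connects vertices at depths $j$ and $j+1$ for some integer $j$ with $\operatorname{depth}(a) \le j \le \operatorname{depth}(a) + D - 1$. Suppose for contradiction that $u$ and $v$ lie in different components of both $\mathcal{A}_i$ and $\mathcal{B}_i$. Then some $j_A = 3h_i m$ with $m \ge 1$ and some $j_B = h_i + 3h_i m'$ with $m' \ge 0$ both lie in this range of $D$ consecutive integers, giving $|j_A - j_B| \le D - 1 \le h_i - 1$. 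But $|j_A - j_B| = h_i\,|3(m - m') - 1| \ge h_i$, since $3(m - m') - 1$ is a nonzero integer. The contradiction forces at least one of $\mathcal{A}_i, \mathcal{B}_i$ to contain a single component holding both $u$ and $v$.
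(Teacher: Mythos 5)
Your proof is correct, and for the substantive Part 4 it takes a genuinely different route from the paper's. The paper argues constructively: it locates the least common ancestor $x$ inside its tree $F \in \mathcal{A}_i$, lets $r$ be the root of $F$, and splits into cases on $d_{rx}$ --- if $d_{rx} \le 2h_i$ then $F$ itself already contains both $u$ and $v$, and otherwise it exhibits a vertex $y$ on the $r$--$x$ path at distance roughly $h_i$ from $r$ which is the root of a tree $F' \in \mathcal{B}_i$ containing both. Your argument is instead a contradiction on the depths of the cut edges: if $u$ and $v$ were separated in both families, a cut level $j_A = 3h_im$ of $\mathcal{A}_i$ and a cut level $j_B = h_i(1+3m')$ of $\mathcal{B}_i$ would both have to lie among the at most $h_i$ consecutive depth values spanned by the $u$--$v$ path, yet $|j_A - j_B| = h_i\,|3(m-m')-1| \ge h_i$ since $3(m-m')-1$ is never zero. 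This is shorter and sidesteps the bookkeeping the paper's case analysis requires (indeed the paper's choice $d_{ry}=h_i+1$ is off by one for $\mathcal{A}_i$-trees rooted at depth $3h_im+1$ rather than at depth $0$; your argument is immune to such issues). You also actually verify Parts 1--3, which the paper dismisses as following ``directly from the definitions''; in particular, your observation that $O_i^B \subseteq O_{i+1}^B$ reduces to $h_{i+1} \equiv 1 \pmod 3$, i.e.\ to $2^{k-i-1}$ being even, is a genuine check the paper omits, and you correctly confirm that it holds throughout the stated range $0 \le i \le \max(e,o) = k-3$.
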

\begin{proof}
The first three parts of the lemma follow directly from the definitions. Here we will prove the last part of the lemma.

 Assume that $d_{uv} \le h_i$ in $T$ and let $x$ be the least common ancestor of $u$ and $v$ in $T$. 
Without loss of generality, let $d_{vx} \le d_{ux}\le h_i$. 
Let $F$ be the tree in $\mathcal{A}_i$ such that $x \in V(F)$ and let $r$ be the root of $F$. 
We know that $d_{rx} \le 3 h_i$, by construction of $F$.
If $d_{rx} \le 2h_i$, then $d_{rv} \le d_{ru} = d_{rx}+ d_{xu} \le 2h_i + h_i \le 3 h_i$ and therefore, $v, u \in V(F)$, by construction. 

On the other hand, if $d_{rx} > 2 h_i$, then $\exists y \in V(F)$ such that $d_{ry}=h_i+1$ and $y$ is on the path from $r$ to $x$ in $T$. 
By our construction, $y$ becomes the root of a tree $F' \in \mathcal{B}_i$. 
Since $d_{rx} \le 3 h_i$ by construction of $F$ and $d_{ry}=h+1$, we have $d_{xy}=d_{rx}-d_{ry} < 2 h_i$. 
This gives $d_{uy}=d_{ux}+d_{xy} < h_i + 2 h_i = 3 h_i$ Similarly, $d_{vy}=d_{vx}+d_{vy} < h_i + 2 h_i = 3 h_i$. Therefore, $v, u \in V(F')$, by construction. 
 ~\hfill$\qed$
\end{proof}
\begin{definition}\label{defCombineWeights}
 If $T_1, T_2, \ldots, T_k$ are trees with disjoint vertex sets and for $1 \le j \le k$, $W_j: E(T_j) \mapsto [-1, 1]^d$, then 
 a weight-vector assignment $W: E(T_1) \cup E(T_2) \cup \cdots \cup E(T_k) \mapsto [-1, 1]^d$ can be obtained by assigning $W(e)=W_j(e)$, where 
 $T_j$ is the tree containing the edge $e$. Then, $W$ is the weight-vector assignment for $E(T_1) \cup E(T_2) \cup \cdots \cup E(T_k)$ 
 derived from $W_1, W_2, \ldots, W_k$. 
\end{definition}
\subsection{A randomized algorithm for constructing the cube representation}
From our definitions, $\{h_e, h_o\} = \{2^{2^3}, 2^{2^4}\}$. 
The idea of recursive decomposition of trees and extending the weight-vector assignments of smaller trees to  weight-vector assignments of bigger trees 
was used by Krauthgamer et al. \cite{Krauth2007} 
to attain injectivity while embedding the vertices in $Z^d_\infty$. 
As we will explain soon, the same technique helps us to make sure that the hypercubes corresponding to non-adjacent vertex pairs do not intersect. 
The algorithm for constructing a weight-vector assignment for $E(T)$ that separates every pair of non-adjacent vertices of $T$ is given below:
\begin{enumerate}
 \item Using Lemma \ref{lemSmallTreeCubicity}, construct cube representations of dimension $t=\lceil 22.77 \times \rho(T) \rceil +2$ for each of the subtrees 
belonging to $\mathcal{L}_{e} \cup \mathcal{L}_{o}$.
 \item Using the correspondence given in Section \ref{weightAssignments} between cube representations and weight-vector assignments, 
for each tree $F \in \mathcal{A}_e \cup \mathcal{B}_e \cup \mathcal{A}_o \cup \mathcal{B}_o$, compute a weight-vector assignment $W_e^F: E(F) \mapsto [-1, 1]^{t}$. 
Notice that $\bigcup_{F \in \mathcal{A}_e}{E(F)}=E(T) \setminus O_{e}^A$.
Combine the weight-vector assignments of trees in $\mathcal{A}_e$ as in Definition \ref{defCombineWeights} and obtain $W^A_e : E(T) \setminus O_{e}^A \mapsto [-1, 1]^{t}$. Similarly, obtain $W^B_e : E(T) \setminus O_{e}^B \mapsto [-1, 1]^{t}$ from weight-vector assignments of trees in 
$\mathcal{B}_e$, $W^A_o : E(T) \setminus O_{o}^A \mapsto [-1, 1]^{t}$ from weight-vector assignments of trees $F \in \mathcal{A}_o$ 
and $W^B_o : E(T) \setminus O_{o}^B \mapsto [-1, 1]^{t}$ from weight-vector assignments of trees in $\mathcal{B}_o$.
\item Set $i=\max(e, o)$ and repeat steps 3a to 3d while $i>1$.
\begin{enumerate}
\item For each edge $uv$ belonging to $E(T) \setminus O_{i}^A$, assign $W^A_{i-2}(uv)=W^A_{i}(uv)$ and for each edge $uv$ belonging to $E(T) \setminus O_{i}^B$, assign $W^B_{i-2}(uv)=W^B_{i}(uv)$.
\item For each tree $F \in \mathcal{A}_{i-2}$, do the following: For each edge $uv$ of $F$ such that $uv \in O_{i}^A \setminus O_{i-2}^A$, 
$W^A_{i-2}(uv)$ is assigned a weight-vector from $\{-1, 1\}^t$, chosen uniformly at random. Now, each edge $uv$ of $F$ has got a weight-vector under $W^A_{i-2}$. 
For each vertex $v$ of $F$, compute $S(v)$ as the sum of weight-vectors on edges of the path in $F$ from the root of $F$ to $v$, as given by $W^A_{i-2}$. 
For each pair of non-adjacent vertices $u$ and $v$ of $F$ such that $d_{uv} \ge h_{i-1}$, check whether $\|S(v) - S(u)\|_\infty > 1$. 
Repeat Step 3b, until the above condition becomes true simultaneously for all pair of non-adjacent vertices $u$ and $v$ of $F$ such that $d_{uv} \ge h_{i-1}$.
\item For each tree $F \in \mathcal{B}_{i-2}$, do the following: For each edge $uv$ of $F$ such that $uv \in O_{i}^B \setminus O_{i-2}^B$, 
$W^B_{i-2}(uv)$ is assigned a weight-vector from $\{-1, 1\}^t$, chosen uniformly at random. Now, each edge $uv$ of $F$ has got a weight-vector under $W^B_{i-2}$. 
For each vertex $v$ of $F$, compute $S(v)$ as the sum of weight-vectors on edges of the path in $F$ from the root of $F$ to $v$, as given by $W^B_{i-2}$. 
For each pair of non-adjacent vertices $u$ and $v$ of $F$ such that $d_{uv} \ge h_{i-1}$, check whether $\|S(v) - S(u)\|_\infty > 1$. 
Repeat Step 3c, until the above condition becomes true simultaneously for all pair of non-adjacent vertices $u$ and $v$ of $F$ such that $d_{uv} \ge h_{i-1}$.
\item Set $i=i-1$.
\end{enumerate}
\item For each edge $uv$ belonging to $E(T) \setminus O_{1}^A$, assign $W'^A_{0}(uv)=W^A_{1}(uv)$ and for each edge $uv$ belonging to $O_{1}^A$, 
assign the all zeros vector to $W'^A_{0}(uv)$. Similarly, for each edge $uv$ belonging to $E(T) \setminus O_{1}^B$, assign $W'^B_{0}(uv)=W^B_{1}(uv)$ 
and for each edge $uv$ belonging to $O_{1}^B$, assign the all zeros vector to $W'^B_{0}(uv)$. 
\item Output $W^A_{0} \circ W^B_{0} \circ W'^A_{0} \circ W'^B_{0}$, a weight-vector assignment from $E(T)$ to $[-1, 1]^{4t}$ obtained by concatenating 
the components of weight assignments $W^A_{0}$, $W^B_{0}$, $W'^A_{0}$ and $W'^B_{0}$ together. 
\end{enumerate}
\begin{property}\label{propSeparation}
 $W^A_{0} \circ W^B_{0}$ is a separating weight-vector assignment for every non-adjacent pair of vertices $u$ and $v$ of $T$ such that $d_{uv} \le h_e$ 
or $h_{i-1} \le d_{uv} \le  h_{i-2}$, for any even integer $i$ such that $e \ge i \ge 2$. 
Similarly, $W'^A_{0} \circ W'^B_{0}$ is a separating weight-vector assignment for every non-adjacent pair of vertices $u$ and $v$ of $T$ such that $d_{uv} \le  h_o$ 
or $h_{i-1} \le d_{uv} \le h_{i-2}$, for any odd integer $i$ such that $o \ge i \ge 3$. 
\end{property}
\begin{proof}
Let $u$ and $v$ be two non-adjacent vertices in $T$. 
If $d_{uv} \le h_e$, by part 4 of Lemma \ref{lemSeparationLevel}, there exist at least one subtree $F \in \mathcal{L}_e$ 
such that both $u$ and $v$ belong to $V(F)$. In step 2 of the algorithm, we computed $W_e^F$ from a cube representation of $F$, 
which is a separating weight-vector assignment by the correspondence given in Lemma \ref{lemCorrespondance}. If $F \in \mathcal{A}_e$,
then $W_e^F$ is one of the weight-vector assignment from which $W^A_e$ is derived, and on each edge of the path from $u$ to $v$ in $T$,
the weight-vector assigned by $W^A_e$ is the same as the weight-vector assigned by $W_e^F$. Since each edge $xy$ of the path from $u$ to $v$ 
in $T$ belongs to $E(T) \setminus O_{e}^A$, in Step 3a the algorithm assigns $W^A_{i-2}(xy)=W^A_{i}(xy)$ for each even integer $i$ where $e \ge i \ge 2$.
Thus, finally we will have $W^A_{0}(xy)=W^A_{e}(xy)=W_e^F(xy)$. Therefore,   
by Property \ref{propSumofWeights} it follows that $W^A_0$ will be a separating weight-vector assignment
for $u$ and $v$. By similar reasons, if $F \in \mathcal{B}_e$, $W^B_0$ will be a separating weight-vector assignment
for $u$ and $v$. 

Similarly, if $h_{i-1} \le d_{uv} \le  h_{i-2}$, for any even integer $i$ such that $e \ge i \ge 2$, then 
by part 4 of Lemma \ref{lemSeparationLevel} there exist at least one subtree $F \in \mathcal{L}_{i-2}$ such that 
both $u$ and $v$ belong to $V(F)$. 
If $F \in \mathcal{A}_{i-2}$, in step 3a of the algorithm we would have made sure that $W^A_{i-2}$ is a 
separating weight-vector assignment for $u$ and $v$. As in the earlier case, for each edge $xy$ of the path from $u$ to $v$ 
in $T$, $W^A_{0}(xy)=W^A_{i-2}(xy)$ and by Property \ref{propSumofWeights}, $W^A_0$ will be a separating weight-vector assignment
for $u$ and $v$. Similarly, if $F \in \mathcal{B}_{i-2}$, $W^B_0$ will be a separating weight-vector assignment
for $u$ and $v$. 

Thus, for every non-adjacent pair of vertices $u$ and $v$ of $T$ such that $d_{uv} \le h_e$ 
or $h_{i-1} \le d_{uv} \le  h_{i-2}$ for any even integer $i$ such that $e \ge i \ge 2$ one of $W^A_{0}$ and $W^B_{0}$ 
is a separating weight-vector assignment, which implies that $W^A_{0} \circ W^B_{0}$ is a separating weight-vector assignment
for $u$ and $v$. 

The proof of the second part of the lemma is similar. If $u$ and $v$ are non-adjacent pairs of vertices of $T$ such that 
$d_{uv} \le  h_o$ or $h_{i-1} \le d_{uv} \le h_{i-2}$, for any odd integer $i$ such that $o \ge i \ge 3$, 
then there exist at least one subtree $F \in \mathcal{L}_{i-2}$ such that 
both $u$ and $v$ belong to $V(F)$. If $F \in \mathcal{A}_{i-2}$, we get $W'^A_{0}(xy)=W^A_{1}(xy)=W^A_{i-2}(xy)$ and if 
$F \in \mathcal{B}_{i-2}$, we get $W'^B_{0}(xy)=W^B_{1}(xy)=W^B_{i-2}(xy)$, for each edge $xy$ of the path from $u$ to $v$ 
in $T$. This implies that 
$W'^A_{0} \circ W'^B_{0}$ is a separating weight-vector assignment
for $u$ and $v$. 
  ~\hfill$\qed$
\end{proof}
The following is a direct consequence of Property \ref{propSeparation}. 
\begin{theorem}
$\mathcal{W}= W^A_{0} \circ W^B_{0} \circ W'^A_{0} \circ W'^B_{0}$ is a separating weight-vector assignment for each non-adjacent pair of vertices $u$ and $v$ of $T$. 
Here, $W: E(T) \mapsto [-1, 1]^{4t}$, where $t=\lceil 22.77 \times \rho(T) \rceil +2$.
\end{theorem}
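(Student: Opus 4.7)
My plan is a direct case analysis on $d_{uv}$ for an arbitrary non-adjacent pair $u, v$, using Property~\ref{propSeparation} to locate a subset of the $4t$ coordinates of $\mathcal{W}$ that already separates them. Since separation in the $\ell^\infty$ norm is preserved when additional coordinates are appended, it suffices to exhibit, for every feasible value of $d_{uv}$, a choice between $W^A_0 \circ W^B_0$ and $W'^A_0 \circ W'^B_0$ that is guaranteed to separate $u, v$; on the other side, adjacency is preserved automatically because each individual weight-vector lies in $[-1, 1]^t$, so the concatenated sum on any single edge remains in $[-1, 1]^{4t}$.

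I would split the argument at the threshold $\max(h_e, h_o) = 2^{16}$. If $d_{uv} \le 2^{16}$, then since $\{h_e, h_o\} = \{2^8, 2^{16}\}$, at least one of $d_{uv} \le h_e$ or $d_{uv} \le h_o$ holds, and the corresponding base clause of Property~\ref{propSeparation} yields separation from the appropriate half of $\mathcal{W}$. Otherwise $d_{uv} > 2^{16}$, and I would locate $d_{uv}$ in the unique interval $[h_{j+1}, h_j]$ from the decreasing sequence $h_0 > h_1 > \cdots > h_{\max(e,o)}$; setting $i = j+2$, the parity of $i$ selects either the even-clause (which certifies separation via $W^A_0 \circ W^B_0$) or the odd-clause (which certifies separation via $W'^A_0 \circ W'^B_0$) of Property~\ref{propSeparation}.

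The main obstacle I anticipate is the combinatorial bookkeeping of showing that the union of the even intervals $[h_{i-1}, h_{i-2}]$ for $i \in \{2, 4, \ldots, e\}$, the odd intervals $[h_{i-1}, h_{i-2}]$ for $i \in \{3, 5, \ldots, o\}$, and the two base ranges $[0, h_e]$ and $[0, h_o]$ together cover every possible value of $d_{uv}$ with no interior gap. The crucial facts that make the bookkeeping routine are that $e$ and $o$ are consecutive integers (so $h_{\max(e,o)-1} = \max(h_e, h_o) = 2^{16}$, which joins the recursive coverage exactly to the base-case range at the threshold), and that successive intervals from the two parities share endpoints at each $h_j$ so no interior gap can open. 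Once these are verified, the theorem follows immediately from Property~\ref{propSeparation}.
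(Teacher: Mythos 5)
Your proposal is correct and matches the paper's approach: the paper simply declares this theorem a direct consequence of Property~\ref{propSeparation}, and your case analysis on $d_{uv}$ --- the base ranges up to $\max(h_e,h_o)=2^{16}$ plus the interleaved even/odd intervals $[h_{i-1}, h_{i-2}]$ meeting at shared endpoints $h_j$ --- is exactly the bookkeeping that justifies that claim. Your accompanying observations, that concatenation of coordinates preserves separation in the $\ell^\infty$ norm and keeps every edge weight-vector inside $[-1,1]^{4t}$, are likewise the intended (implicit) parts of the paper's argument.
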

The following lemma will help us to calculate the expected number of times the algorithm repeats Step 3b (or 3c) till it obtains a suitable weight-vector assignment 
for a tree $F \in \mathcal{L}_{i-2}$, where $e \ge i \ge 2$. 
\begin{lemma}\label{extendWtOnce}
Let $i$ be such that $h_i \ge 2^{2^3}$ and $i \ge 2$. Let $W_i: E(T) \setminus O_{i}^A \mapsto [-1, 1]^t$, where 
$t= \lceil 22.77 \times \rho(T) \rceil +2$ and $F \in \mathcal{A}_{i-2}$. 
Suppose for each edge $uv$ of $F$ such that $uv \in E(T) \setminus O_{i}^A$, 
we set $W_{i-2}(uv)=W_{i}(uv)$ and for each edge $uv$ of $F$ such that $uv \in O_{i}^A \setminus O_{i-2}^A$, we assign $W_{i-2}(uv)$ to be a vector 
from $\{-1, 1\}^t$ chosen independently and uniformly at random.  
For each vertex $v$ of $F$, let $S_{W_{i-2}}(v)$ be the sum of 
edge weights of the edges belonging to the path from the root of $F$ to $v$, as given by $W_{i-2}$.
Then, with probability at least $p=0.64$, for every pair of non-adjacent vertices $u$ and $v$ of $F$ such that 
$d_{uv} \ge h_{i-1}$, $\|S_{W_{i-2}}(v) - S_{W_{i-2}}(u)\|_\infty > 1$. 
\end{lemma}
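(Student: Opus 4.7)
The plan is to fix a single non-adjacent pair $(u,v)$ of vertices in $F$ with $d_{uv}\ge h_{i-1}$, bound the probability that it fails to be separated, and then apply a union bound over all such pairs. Invoking Property \ref{propSumofWeights} inside $F$, I would write $S_{W_{i-2}}(u)-S_{W_{i-2}}(v)$ as a signed sum of the weight vectors of the edges on the $u$-$v$ path in $F$. The edges of this path split into two classes: those in $E(T)\setminus O_i^A$ keep their weights from $W_i$ and act as a deterministic quantity for this trial, while those in $O_i^A\setminus O_{i-2}^A$ are precisely the edges that receive fresh, independent, uniform $\{-1,+1\}^t$ vectors in step 3b.

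The first main step is a lower bound on $R$, the number of random edges on the $u$-$v$ path. An edge at depth $j$ is random exactly when $j$ is a multiple of $3h_i$ but not of $3h_{i-2}$. Since the LCA $w$ of $u$ and $v$ sits at distance at least $d_{uv}/2\ge h_i^2/2$ from whichever of $u,v$ is deeper, and depths in the shared range $[d_w,\min(d_u,d_v)-1]$ are crossed on both sides of the path, a careful count gives $R\ge d_{uv}/(3h_i)-O(1)\ge h_i/3-O(1)=\Omega(h_i)$; here the $O(1)$ swallows the at most two multiples of $3h_{i-2}$ that can fall within the depth span of $F$, which has height at most $3h_{i-2}$.

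For a fixed coordinate $j$, the $j$-th component of $S_{W_{i-2}}(u)-S_{W_{i-2}}(v)$ is some fixed real $a_j$ (the deterministic part) plus a simple $\pm 1$ random walk $S_R$ of length $R$. The event $|a_j+S_R|\le 1$ amounts to $S_R$ landing in a specific interval of length $2$, and the Stirling estimate $\binom{R}{\lfloor R/2\rfloor}/2^R\le \sqrt{2/(\pi R)}$ yields $\Pr[|a_j+S_R|\le 1]\le 2\sqrt{2/(\pi R)}$. Because the $t$ coordinates of every random weight vector are drawn independently, the per-pair failure probability is at most $(2\sqrt{2/(\pi R)})^t$, which is of the form $C^t\,h_i^{-t/2}$ for an explicit constant $C$.

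The final step is the union bound. Applying Lemma \ref{lemvertexCount} to the ball of radius $3h_{i-2}$ around the root of $F$ in $T$ yields $|V(F)|\le 2(6h_{i-2}+1)^{\rho(T)}$, so the number of relevant pairs is $O(h_{i-2}^{2\rho(T)})=O(h_i^{8\rho(T)})$ via $h_{i-2}=h_i^4$. Multiplying, the total failure probability takes the form $(\mathrm{const})^t\cdot h_i^{8\rho(T)-t/2}$, and plugging in $t=\lceil 22.77\,\rho(T)\rceil+2$ drives the net exponent of $h_i$ to at most $-3.385\,\rho(T)-1$. The main obstacle I would expect is the numerical balancing: the constant $22.77$ is chosen precisely so that, combined with $h_i\ge 2^{2^3}$ and $\lceil\rho(T)\rceil\ge 1$, the competing factors $|V(F)|^2$ and $(2\sqrt{2/(\pi R)})^t$ together stay below $1-p=0.36$ uniformly over the admissible range of $h_i$ and $\rho(T)$.
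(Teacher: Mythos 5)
Your proposal follows the paper's skeleton faithfully up to the last step: Property \ref{propSumofWeights} to express the difference as a deterministic offset plus a $\pm 1$ walk over the roughly $h_i/4$ edges of $O_i^A\setminus O_{i-2}^A$ on the path, the two-integers-in-an-interval-of-length-two observation, Stirling, and independence across the $t$ coordinates. All of that matches the paper's proof. The gap is in the union bound, and it is not merely a matter of ``numerical balancing'': a flat union bound over all $|V(F)|^2$ pairs with the worst-case per-pair probability provably does not give a constant below $1$ with $t=\lceil 22.77\,\rho(T)\rceil+2$. Concretely, at the extreme $h_i=2^8$ you have per-pair failure probability about $(3.2/\sqrt{h_i})^{t}\approx 2^{-52.7\rho}$, while $|V(F)|^2\le 4(6h_{i-2}+1)^{2\rho}\approx 4\cdot 2^{69.2\rho}$ (using $h_{i-2}=h_i^4$), so the product is about $4\cdot 2^{16.5\rho}\gg 1$. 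Writing the comparison as ``$h_i^{8\rho}$ versus $h_i^{-t/2}$'' and concluding a net exponent of $-3.385\rho-1$ hides the constant bases: the factor $(2\sqrt{2/(\pi R)})^{t}$ carries a $(\text{const})^{22.77\rho}\approx 2^{38\rho}$ that must be beaten by $h_i^{-3.385\rho}\le 2^{-27\rho}$, and it is not. No choice of a mildly larger constant rescues the flat bound without also changing the advertised dimension.

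The paper's proof closes this by a dyadic layering of the pairs by distance: for $1\le l\le \log h_{i-1}$, let $\mathcal{P}_l$ be the pairs with $2^{l-1}h_{i-1}\le d_{uv}\le 2^{l}h_{i-1}$. Two things are gained simultaneously. First, $|\mathcal{P}_l|$ is bounded by Lemma \ref{lemvertexCount} as $|V(F)|$ times the size of a ball of radius about $2^{l}h_{i-1}$, i.e.\ roughly $(6h_{i-2}+1)^{\rho}(2^{l+1}h_{i-1}+1)^{\rho}\approx h_i^{6\rho}\cdot 2^{l\rho}$ for the dominant term, rather than $|V(F)|^2\approx h_i^{8\rho}$. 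Second, a pair in $\mathcal{P}_l$ has at least $2^{l-1}h_i/4$ random edges on its path, so its failure probability is $(1.61/\sqrt{2^{l-1}h_i/4})^{t}$, which decays geometrically in $l$ fast enough that the whole sum is at most twice the $l=1$ term. With these two refinements the exponents compare as $6\rho\log_2 h_i + O(\rho)$ against $(t/2)\log_2 h_i - O(\rho)\ge 11.38\rho\log_2 h_i-O(\rho)$, and the inequality $5.38\log_2 h_i\ge 43$ just barely holds at $h_i=2^8$ — which is exactly why the constant is $22.77$ and the base case stops at height $2^{2^3}$. You should replace your single union bound with this layered one; the rest of your argument stands.
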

\begin{proof}
Consider a pair of non-adjacent vertices $u$ and $v$ belonging to the vertex set of 
the same rooted subtree $F \in \mathcal{A}_{i-2}$ and $d_{uv} \ge h_{i-1}$. Let $r$ be the root of $F$. 
Since $u$ and $v$ both belong to the same subtree $F \in \mathcal{A}_{i-2}$, all the edges in the $uv$ path fall in $E(T) \setminus O_{i-2}^A$. 
Therefore, all the edges in the $uv$ path get their weight-vectors assigned under $W_{i-2}$. 
But since $d_{uv} \ge h_{i-1} = {h_{i}}^2$ and $h_i \ge 2^8$ and each subtree in $\mathcal{A}_i$ has height at most $3 h_{i}$, among the edges in the $uv$ path, 
at least $\frac{h_i}{4}$ edges should belong to $O_{i}^A \setminus O_{i-2}^A$ and got their weights assigned 
independently and uniformly at random from $\{-1, 1\}^t$, as stated in the lemma. 
The other edges on the $uv$ path were already assigned values in $W_i$ and these values remain the same in $W_{i-2}$. 
Let $u=v_0, v_1, v_2, \ldots, v_q, v_{q+1}=v$ be the path in $T$ between $u$ and $v$, where $v_j$ is the least common ancestor of $u$ and $v$ in $T$.
Also let $S^k$ denote the $k^{th}$ coordinate function of $S_{W_{i-2}}$ and $W^k$ denote the $k^{th}$ coordinate function of $W_{i-2}$.  
By property \ref{propSumofWeights}, for each $1 \le k \le t$, 
$S^k(u) - S^k(v) =  X_k + c_k$, where $X_k =\sum_{\{0 \le i \le j-1 \text{ and } v_i v_{i+1} \in O_{i-2}^A \setminus O_{i}^A \}}{W^k(xy)} - \sum_{\{j \le i \le q \text{ and } v_i v_{i+1} \in O_{i-2}^A \setminus O_{i}^A \}}{W^k(xy)}$ and $c_k \in {\mathbb{R}}$ is a constant, depending on the weight vectors fixed by $W_i$ for edges in the $uv$ path that belong to $E(T) \setminus O_{i}^A$. Let $l$ be the number of edges in the $uv$ path that belong to $E(T) \setminus O_{i}^A$.

We will bound the probability that $|S^k(v) - S^k(u)| \le 1$. Note that $X_k$ is the sum of $l$ iid random variables, each of which is $-1$ or $+1$ with equal probability. 
Therefore,  
$$Pr(|S^k(v) - S^k(u)| \le 1)= Pr(X_k \text{ falls in the interval }[-c_1-1, -c_1+1]$$
But since $X_k$ can take only integer values and $X_k$ can take at most two possible values in $[-c_1-1, -c_1+1]$ irrespective of whether $l$ is even or odd, because any interval of length two can contain at most two integers of the same parity. 
Therefore, $Pr(|S^k(v) - S^k(u)| \le 1) \le 2 {l \choose \left\lceil \frac{l}{2}\right \rceil} 2^{-l}$.
Since $l \ge \frac{h_i}{4}\ge 2^6$, using Sterling's approximation formula, 
$$Pr(|S^k(v) - S^k(u)| \le 1) \le \frac{1.61}{\sqrt{l}} \le \frac{1.61}{\sqrt{\frac{h_i}{4}}}$$
$$Pr(\|S_{W_{i-2}}(v) - S_{W_{i-2}}(u)\|_\infty \le 1) \le \left(\frac{1.61}{\sqrt{\frac{h_i}{4}}}\right)^{t}$$
Since the height of $F$ is at most $3h_{i-2}$, by Lemma \ref{lemvertexCount}, there are at most $2(6 h_{i-2} + 1)^{\rho(T)}$ vertices in $T_i$ and 
the number of non-adjacent pairs $u, v \in V(F)$ such that 
$h_{i-1} \le d_{uv} \le 2 \times h_{i-1}$,  is at most 
$4(6 h_{i-2} + 1)^{\rho(T)} \left (2 \times 2h_{i-1}+1\right)^{\rho(T)}$. 

For each integer $l$ where $1\le l \le \log(h_{i-1})$, let $\mathcal{P}_l$ denote the set consisting
of the non-adjacenct pairs $u, v \in V (F)$ such that $2^{l-1}h_{i-1} \le d_{uv} \le 2^l h_{i-1}$. 
Using Lemma \ref{lemvertexCount}, it is easy to see that for each integer $l$ where $1 \le l \le \log(h_{i-1})$,
$|\mathcal{P}_l| \le 4(6h_{i-2} + 1)^{\rho(T)} (2^l 2h_{i-1} + 1)^{\rho(T)}$. Using similar arguments as given in
the previous paragraph, we also get the following:
For each pair $(u, v) \in \mathcal{P}_l$,
$$Pr(\|S_{W_{i-2}}(v) - S_{W_{i-2}}(u)\|_\infty \le 1) \le \left(\frac{1.61}{\sqrt{(2^{l-1}\times \frac{h_i}{4} )}}\right)^{t}$$
Applying union bound,
$$Pr(\exists u, v \in V(F) \text{ with } d_{uv} \ge h_{i-1}\text{ and }\|S_{W_{i-2}}(v) - S_{W_{i-2}}(u)\|_\infty \le 1)$$
\begin{align*}
&\le \sum_{l=1}^{\log(h_{i-1})}{ |\mathcal{P}_l| \left(\frac{1.61}{\sqrt{(2^{l-1} \times \frac{h_i}{4})}}\right)^{t}}\\
&\le \sum_{l=1}^{\log(h_{i-1})}{4(6 h_{i-2} + 1)^{\rho(T)} \left(2^{l} 2 h_{i-1}+1\right)^{\rho(T)} \left(\frac{1.61}{\sqrt{(2^{l-1} \times \frac{h_i}{4})}}\right)^{t}}\\
&\le 8(6 h_{i-2} + 1)^{\rho(T)} \left(2 \times 2 h_{i-1}+1\right)^{\rho(T)} \left(\frac{1.61}{\sqrt{\frac{h_i}{4}}}\right)^{t}\\
&\le 0.33, \text{ since $t\ge \lceil 22.77 \times \rho(T) \rceil +2$, $h_i \ge 2^{2^3}$ and $h_{i-2}=h_{i-1}^2=h_i^4.$}
\end{align*}
 Therefore, with probability at least $0.67$, for every pair of non-adjacent vertices $u$ and $v$ of $F$ such that 
$d_{uv} \ge h_{i-1}$, $|S_{w_i}(v) - S_{w_i}(u)| > 1$ for some $k$ such that $1 \le k \le t$.
~\hfill$\qed$
\end{proof}
\begin{lemma}\label{lemRepeatCount}
The expected number of times the algorithm repeats Step 3b (or 3c) till it obtains a suitable weight-vector assignment for a tree $F \in \mathcal{L}_{i-2}$ is at most $\frac{1}{0.67}$ for any $i$ such that $e \ge i \ge 2$.
\end{lemma}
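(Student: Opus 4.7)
The plan is to reduce this to a standard fact about geometric random variables. First, I would observe that each time the algorithm executes Step 3b (resp. Step 3c) for a fixed tree $F \in \mathcal{A}_{i-2}$ (resp. $\mathcal{B}_{i-2}$), it draws a fresh collection of independent uniform random vectors from $\{-1,1\}^t$ for each edge $uv \in O_i^A \setminus O_{i-2}^A$ (resp. $O_i^B \setminus O_{i-2}^B$), and these draws are independent of the draws made in previous iterations. The deterministically fixed part of $W_{i-2}$ on the edges of $E(T) \setminus O_i^A$ (resp. $E(T) \setminus O_i^B$) remains the same across iterations, so the hypothesis of Lemma \ref{extendWtOnce} holds identically in every trial.

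Next, I would apply Lemma \ref{extendWtOnce} to each trial. That lemma guarantees that in a single execution of Step 3b (resp. 3c), the separation condition $\|S_{W_{i-2}}(v) - S_{W_{i-2}}(u)\|_\infty > 1$ holds simultaneously for every non-adjacent pair $u,v \in V(F)$ with $d_{uv} \ge h_{i-1}$ with probability at least $0.67$ (as shown in the concluding line of the proof of Lemma \ref{extendWtOnce}). Thus each iteration independently succeeds with probability at least $0.67$, and the number $N$ of iterations until the first success is stochastically dominated by a geometric random variable with parameter $0.67$.

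Finally, I would use $\mathbb{E}[N] \le 1/0.67$ for such a geometric random variable. This completes the proof.

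The main (minor) obstacle is simply articulating the independence of the trials clearly — one needs to be careful to note that Step 3a assigns the deterministic values \emph{before} the loop over Step 3b begins, so the only randomness entering Step 3b comes from the fresh i.i.d.\ draws in that step, making successive repetitions genuinely independent Bernoulli trials. Once that is observed, the statement follows immediately from Lemma \ref{extendWtOnce} and the mean of a geometric distribution.
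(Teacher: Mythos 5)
Your proposal is correct and matches what the paper intends: Lemma~\ref{lemRepeatCount} is stated without an explicit proof precisely because it is the standard geometric-distribution consequence of Lemma~\ref{extendWtOnce}, with each repetition of Step 3b (or 3c) being an independent trial succeeding with probability at least $0.67$. Your added care about the independence of successive trials (fresh i.i.d.\ draws each iteration, deterministic part fixed before the loop) is a reasonable and faithful elaboration of the paper's implicit argument.
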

\begin{theorem}
For any $T$, we can compute a $4(\lceil 22.77 \times \rho(T) \rceil +2)$-dimensional cube representation using a randomized algorithm which runs in time polynomial in expectation. Cubicity of trees can be approximated within a constant factor in deterministic polynomial time.
\end{theorem}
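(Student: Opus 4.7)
The plan is to assemble the first assertion from the preceding theorem together with Lemma \ref{lemCorrespondance}, and then to bound the expected runtime by walking through the algorithm of Section 3.2 step by step. First I would invoke the preceding theorem to obtain that $\mathcal{W}: E(T) \mapsto [-1,1]^{4t}$ with $t = \lceil 22.77\rho(T)\rceil + 2$ is a separating weight-vector assignment for every pair of non-adjacent vertices of $T$. Lemma \ref{lemCorrespondance} then converts $\mathcal{W}$ in polynomial time into a $4t$-dimensional cube representation of $T$, yielding the claimed dimension bound.

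For the expected running time I would go step by step through the algorithm. Step 1 calls Lemma \ref{lemSmallTreeCubicity} on each subtree in $\mathcal{L}_e \cup \mathcal{L}_o$ (all of bounded height, hence polynomial time per base subtree and at most $O(n)$ base subtrees in total). Step 2 is the polynomial-time conversion of Lemma \ref{lemCorrespondance}. Steps 3a, 3d, 4 and 5 are direct bookkeeping. The only randomized and potentially repeated parts are Steps 3b and 3c, for which Lemma \ref{lemRepeatCount} gives expected iteration count at most $1/0.67$ per subtree per level. Since $k = \lceil \log\log h \rceil = O(\log\log n)$, the recursion has $O(\log\log n)$ levels, and by Lemma \ref{lemSeparationLevel} the subtrees of $\mathcal{A}_i$ (respectively $\mathcal{B}_i$) vertex-partition $V(T)$, so the total vertex count across all subtrees at any one level is $O(n)$. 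A single trial of Step 3b/3c only needs to compute partial sums along root-to-vertex paths and verify $\|S(u) - S(v)\|_\infty > 1$ for the relevant $O(n^2)$ non-adjacent pairs, which is polynomial. Linearity of expectation then gives an overall expected running time polynomial in $n$.

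For the second assertion (a deterministic constant-factor approximation to the value $\operatorname{cub}(T)$ alone), I would observe that $\rho(T)$ is computable in deterministic polynomial time directly from its definition: one enumerates all pairs $(v, r)$ with $v \in V(T)$, $1 \le r \le \operatorname{diam}(T)$, and computes $|B_{v,r}|$. Theorem \ref{lbtrees} gives $\lceil \rho(T) \rceil \le \operatorname{cub}(T)$, while the first part of this theorem yields $\operatorname{cub}(T) \le 4(\lceil 22.77\rho(T)\rceil + 2)$. Reporting the number $4(\lceil 22.77\rho(T)\rceil + 2)$ is therefore a deterministic constant-factor approximation of $\operatorname{cub}(T)$.

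The main obstacle I anticipate is simply the careful accounting in the expected-runtime analysis: verifying that the $O(\log\log n)$ recursion depth, the polynomial per-level vertex count, the $O(1)$ expected retry count from Lemma \ref{lemRepeatCount}, and the polynomial per-trial verification cost all compose to a polynomial-in-$n$ bound, without any hidden blow-up when aggregating from per-subtree expectations to the overall expectation. Once this bookkeeping is done, the remainder of the proof is a clean composition of the preceding theorem, Lemma \ref{lemCorrespondance}, Lemma \ref{lemRepeatCount} and Theorem \ref{lbtrees}.
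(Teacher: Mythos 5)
Your proposal is correct and follows essentially the same route as the paper's own proof: both deduce the dimension bound from the preceding separation theorem via Lemma \ref{lemCorrespondance}, bound the expected runtime by stepping through the algorithm with Lemma \ref{lemSmallTreeCubicity}, the $O(\log\log h)$ recursion depth and Lemma \ref{lemRepeatCount}, and obtain the deterministic approximation from the polynomial-time computability of $\rho(T)$ together with the lower bound of Theorem \ref{lbtrees}. Your write-up is in fact slightly more explicit than the paper's (e.g., in citing Theorem \ref{lbtrees} for the lower bound and in the per-level vertex accounting), but the argument is the same.
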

\begin{proof}
The second part of the theorem follows from the first part, because $\rho(T)$ is a polynomial time computable function. 
Since by Lemma \ref{lemCorrespondance}, in polynomial time we can construct a $d$-dimensional cube representation of $T$ from a weight-vector assignment 
$W : E(T) \mapsto [-1, 1]^d$, it is enough to show that the randomized algorithm we described here, for computing a weight-vector assignment $W : E(T) \mapsto [-1, 1]^{4t}$, where $t= \lceil 22.77 \times \rho(T) \rceil +2$ runs in time polynomial in expectation. 

In any partition of the rooted tree $T$ into smaller trees, there can be at most $O(n)$ rooted subtrees. Therefore, by Lemma \ref{lemSmallTreeCubicity}, step 1 of 
the algorithm runs in polynomial time. In step 2 of the algorithm, the weight-vector assignments can be computed in  polynomial time, by Lemma \ref{lemCorrespondance}.
The operation in step 2 of combining the weight assignments on smaller trees as given in Definition \ref{defCombineWeights} can easily be done in polynomial time. 
By the definition of the recursive decomposition, Step 3 is executed at most $O(\log \log h)$ rounds, where $h$ is the height of the tree $T$. 
It is easy to see that the assignments in step 3a can be done in polynomial time. By Lemma \ref{lemRepeatCount}, for each round of execution of step 3, steps 3b and 3c
are repeated only constantly many times in expectation. In each repetition, the algorithm does only a polynomial time operation. 
Steps 4 and 5 are simple assignments, which can be done in polynomial time. 
 ~\hfill$\qed$
\end{proof}
\section{Conclusions}
In this paper, we show that cubicity of trees can be approximated within a constant factor, in deterministic polynomial time. 
As far as we know, this is the first constant factor approximation algorithm known for cubicity of trees. 
A corresponding cube representation of the tree can also be computed by a randomized algorithm which runs 
in time polynomial in expectation. 
The basic techniques for the randomized algorithm are borrowed from the techniques given by 
Krauthgamer et al. \cite{Krauth2007}, for approximating the intrinsic dimensionality of trees.
We feel that this is a surprising coincidence because as we have explained in Appendix, 
intrinsic dimensionality is quite different from cubicity and neither the bounds of these parameters nor 
the proof techniques for these problems work for each other in general. As far as we know, till now there are no works connecting 
the parameters cubicity and intrinsic dimension.

\newpage
\appendix
\section{Appendix}
\subsection{Cubicity and intrinsic dimensionality}\label{paramCompare}
Let $Z$ denote the set of integers and $Z^d_\infty$ be the infinite graph with vertex set 
$Z^d$ and an edge $(u, v)$ for two vertices $u$ and $v$ if and only if $\|u - v\|_\infty = 1$. 
The intrinsic dimensionality of a graph $G$, $\operatorname{dim}(G)$ is the smallest $d$ such that $G$ can be 
injectively embedded on to $Z^d_\infty$. This means that $G$ occurs as a (not necessarily induced) subgraph of $Z^d_\infty$. 

Though both cubicity and intrinsic dimensionality are parameters related to graph embeddings, there are several fundamental differences between them.\\
(1) \textbf{Injectivity:} Intrinsic dimensionality requires the mapping from $V(G)$ to $Z^d$ to be injective. Thus, dense graphs will have relatively high intrinsic 
dimensionality compared to sparse graphs. A clique on $n$ vertices has intrinsic dimensionality $\log_2 n$. 
In contrast, the injectivity constraint is absent for cubicity. In a cube representation, the hypercubes corresponding to two distinct vertices are 
permitted to occupy the same space. Recall that a clique has cubicity zero.\\
(2) \textbf{Vertex Positioning:} In the case of intrinsic dimensionality, we should map the vertices of the graph to points in $Z^d$. However, as we saw in the previous section, the mappings associated with cubicity are from $V(G)$ to ${\mathbb{R}}^d$, giving us more freedom to place the hypercubes corresponding to the vertices. 
There are some graphs for which there is a cube representation in ${\mathbb{R}}^2$ even when its vertices have their neighborhoods different from each other, forcing an injective embedding from $V(G)$ to ${\mathbb{R}}^2$. For example, we can show that a graph having two cliques on $n$ vertices and a matching connecting the corresponding pairs of vertices in both the cliques has cubicity two. Thus, even when cube representations have their corresponding vertex embedding injective, cubicity can be very low, due to the flexibility in vertex positioning.\\
(3) \textbf{Treatment of non-adjacency and monotonicity: }In the case of intrinsic dimensionality, it is possible to map even non-adjacent vertices $u$ and $v$ to
points in $Z^d$ which are at unit distance from each other. Because of this freedom, if a graph has intrinsic dimensionality $k$, its subgraphs will have intrinsic 
dimensionality at most $k$. Thus, intrinsic dimensionality is monotone, with respect to subgraph relation. In particular, all graphs of $n$ vertices have intrinsic dimensionality at most that of a clique on $n$ vertices, namely $\log_2 n$.

However, in the case of cube representations, we require the hypercubes corresponding 
to non-adjacent vertices to be non-intersecting and as we discussed in Section \ref{weightAssignments}, the centers of hypercubes of non-adjacent 
vertices are required to be mapped to points in ${\mathbb{R}}^d$ which are at distance strictly more than one. For this reason, the monotonicity we observed in the case of 
intrinsic dimensionality does not happen for cubicity. A clique has cubicity zero, but almost all graphs on $n$ vertices have cubicity $\Omega(n)$ \cite{AdigaLb}.\\ 
(4) \textbf{Parameter value range:} As we noted, almost all graphs on $n$ vertices have cubicity $\Omega(n)$. There are graphs on $n$ vertices with cubicity $\left \lceil\frac{2 n}{3} \right \rceil$. But the intrinsic dimensionality of a graph on $n$ vertices is at most $\log_2 n$.\\
(5) \textbf{Good polynomial time approximations:}
Krauthgamer et al. \cite{Krauth2007} defined a parameter called \textit{growth rate} of a graph $G$, 
defined as $$\eta(G)= sup \left\{ \frac{\log |B_{v,r}|}{\log r} \mid v \in V(G) \text{, }r > 1\right\}$$ 
Note that, this parameter is computable in polynomial time and for a graph on $n$ vertices, the value of this parameter is at most $\log n$.
Krauthgamer et al. \cite{Krauth2007} showed that for any graph $G$, 
its growth rate is a lower bound for its intrinsic dimensionality. They also showed that $dim(G)$ is $O(\eta(G) \log \eta(G))$ in general and 
in the special case of trees, $dim(G)$ is $O(\eta(G))$. This leads to an $O(\log \log n)$ factor approximation algorithm for the 
intrinsic dimensionality of general graphs and a constant factor approximation algorithm in the case of trees. 
For cubicity, the bound given by Lemma \ref{lbgeneralgraphs} is the only
non-trivial polynomial time computable lower bound known. However, notice that this parameter can only go up to $\log_2 n$, whereas 
almost all graphs on $n$ vertices have cubicity is $\Omega(n)$ \cite{AdigaLb}. Moreover, cubicity is known to be inapproximable in polynomial time, 
within an $O(n^{1-\epsilon})$ factor for any $\epsilon >0$, unless $\text{NP}=\text{ZPP}$. 

Thus, cubicity and intrinsic dimensionality are two graph parameters, not directly comparable with each other in general.
There are graphs for which cubicity exceeds intrinsic dimensionality, and for some others it is the other way. 
Even in the special case of trees, the intrinsic dimension and cubicity can be different. For example, 
a star graph $K_{1, n}$ has intrinsic dimension $\log_3 (n+1)$, whereas the same graph has cubicity $\log_2 n$\cite{Linial1995,Chandran09}.  

In spite of all these contrasts between cubicity and intrinsic dimension, they share an interesting similarity: 
The injectivity requirement places a lower bound on the volume required for injectively embedding a graph on to $Z^d_\infty$ and 
this is the reason for having growth rate as a lower bound for intrinsic dimensionality. 
In the case of cubicity, cubes corresponding to non-adjacent pairs of vertices need to be non-intersecting, giving a lower 
bound to the volume required for placing the cubes. This fact was exploited to obtain the lower bounds given by Lemma \ref{lbgeneralgraphs} and 
Theorem \ref{lbtrees}. In a retrospective analysis, it appears that this similarity is what helped
us to use the techniques developed by Krauthgamer et al. \cite{Krauth2007} in developing our algorithm. 
We will be showing that cubicity of a tree $T$ is $O(\rho(T))$. 

However, as we noted under item (5) above, this similarity between the parameters is not powerful enough to be useful 
in the case of general graphs, because of the approximation hardness results. The techniques do not seem to scale up even in 
other special cases, for example, for graphs without long induced simple cycles. Using the result obtained for trees, 
Krauthgamer et al. \cite{Krauth2007} had showed that graphs 
without induced simple cycles of length greater than $\lambda$ have intrinsic dimensionality $O(\eta(G) \log^2(\lambda+2))$. 
But we know that even chordal graphs can have cubicity as high as $\Omega(n)$, whereas the lower bound 
obtained from Lemma \ref{lbgeneralgraphs} can be at most $\log n$. Moreover, even for split graphs which
form a subclass of chordal graphs, cubicity is known to be NP-hard to approximate within an $O(n^{1-\epsilon})$ factor 
for any $\epsilon >0$, unless $\text{NP}=\text{ZPP}$.
\end{document}